\theoremstyle{plain}
\theoremstyle{definition}
\newtheorem{defn}{\protect\definitionname}
\theoremstyle{definition}
\newtheorem{example}{\protect\examplename}
\theoremstyle{plain}
\newtheorem{lem}{\protect\lemmaname}
\providecommand{\psreset}{\psset{%
		linewidth=0.3pt,linestyle=solid,linecolor=black,
		dotsize=2.5pt,dotsep=2.5pt,arrowsize=4pt,
		fillstyle=none,fillcolor=white,
		showpoints=false,arrows=-,linearc=0,framearc=0,
		hatchsep=2pt,hatchwidth=0.2pt,nodesep=4pt,opacity=1}
	\psset{gridcolor=black!60, subgridcolor=black!30}
}
\titleformat{\section}[block]{\centering\large\bfseries\sffamily}{\thesection.}{0.5em}{}
\titleformat{\subsection}[block]{\flushleft\bfseries}{\thesubsection.}{0.5em}{}
\titleformat{\subsection}[block]{\flushleft\bfseries\sffamily}{\thesubsection.}{0.5em}{}
\titleformat{\subsubsection}[runin]{\normalsize\bfseries\sffamily}{\bfseries\upshape\sffamily\thesubsubsection.}{0.5em}{}[.--\:]
\renewcommand{\thesubsubsection}{\arabic{section}.\arabic{subsection}.\arabic{subsubsection}}
\titlespacing{\section}{0ex}{10ex}{5ex}
\titlespacing{\subsection}{0in}{6ex}{3ex}
\titlespacing{\subsubsection}{0mm}{2ex}{0.5em}
\providecommand{\abstitle}[1]{{\par\vspace*{2ex}\small\bfseries\sffamily #1}\hspace*{1ex}}
\renewenvironment{abstract}%
{\begin{center}\begin{minipage}{0.8\linewidth}%
			\abstitle{Abstract}\small}%
		{\end{minipage}\end{center}\vfill\clearpage}
\providecommand{\Char}[1]{\mathds{1}\left(\,#1\,\right)}
\providecommand{\Real}{{\mathds{R}}}
\providecommand{\tr}{^{\prime}}
\providecommand{\rand}[1]{\mathbf{#1}}
\providecommand{\norm}[1]{\left\lVert#1\right\rVert}
\providecommand{\abs}[1]{\left\lvert#1\right\rvert}
\newcommand{\A}{A}
\theoremstyle{remark}
  \theoremstyle{plain}
  \theoremstyle{definition}
    \newtheorem{proposition}{\protect\propositionname}\theoremstyle{definition}
  \theoremstyle{plain}
\newtheorem{theorem}{\protect\theoremname}\theoremstyle{plain}
  \theoremstyle{definition}
  \providecommand{\assumptionname}{Assumption}
  \providecommand{\definitionname}{Definition}
  \providecommand{\lemmaname}{Lemma}
  \providecommand{\propositionname}{Proposition}
  \providecommand{\remarkname}{Remark}
\providecommand{\corollaryname}{Corollary}
\providecommand{\theoremname}{Theorem}
\providecommand{\examplename}{Example}
\providecommand{\definitionname}{Definition}
\providecommand{\examplename}{Example}
\providecommand{\lemmaname}{Lemma}
\providecommand{\theoremname}{Theorem}
\begin{document}
\title{Random Rank-Dependent Expected Utility\thanks{We thank Maria Jose Boccardi and Jeongbin Kim for allowing the use of the experimental dataset from \citet{ABKK2021}.}\thanks{The ``\textcircled{r}'' symbol indicates that the authors' names are in certified random order, as described by \citet{ray2018certified}. We gratefully acknowledge financial support from 
Social Sciences and Humanities Research Council Insight Development Grant.}}
\author{ 
	Nail Kashaev \textcircled{r}
	Victor H. Aguiar\thanks{Kashaev: Department of Economics, University of Western Ontario; \href{mailto:nkashaev@uwo.ca}{nkashaev@uwo.ca}. Aguiar: Department of Economics, University of Western Ontario; \href{mailto:vaguiar@uwo.ca}{vaguiar@uwo.ca}.}
	}
\date{December, 2021}
\maketitle

\begin{abstract}
We present a novel characterization of random rank-dependent expected
utility for finite datasets and finite prizes. The test lends
itself to statistical testing using the tools in \citet{kitamura2018nonparametric}. 

JEL classification numbers: C50, C51, C52, C91.\\

\noindent Keywords: random utility, expected utility, rank-dependent expected utility. 
\end{abstract}

\section{Introduction}
The rank-dependent expected utility (RDEU) model, first proposed by Quiggin \citeyearpar{quiggin1982theory}, is one of the main alternatives to the expected utility (EU) model. RDEU is popular because it extends EU to accommodate known empirical anomalies that violate the predictions of EU (e.g., Allais Paradox) while at the same time it produces sharp comparative statics and preserves transitivity \citep{quiggin1991comparative}.  
\par 
The RDEU model extends EU by allowing cumulative probabilities of lotteries to be weighted by a weighting function. This means that RDEU does not rely on the independence axiom.\footnote{The independence axiom dictates that for two lotteries $p,q$,  $p$ is weakly preferred to $q$ if and only if $\alpha p +(1-\alpha)r$ is weakly preferred to $\alpha q +(1-\alpha) r$ for any lottery $r$, and $alpha\in [0,1]$.} According to Quiggin \citeyearpar{quiggin1991comparative}, RDEU is EU with respect to a transformed probability distribution. Formally, we say a decision maker (DM) that can be described by a RDEU ranks lottery $p$ over lottery $q$ if and only if there exist a (Bernoulli) utility function $u$ and a weighting function $\phi:[0,1]\to [0,1]$, $\phi(0)=0$, $\phi(1)=1$, such that
\[
\sum_{k=1}^K\left[\phi\left(\sum_{t=1}^{k}p_{t}\right)-\phi\left(\sum_{t=1}^{k-1}p_{t}\right)\right]u(x_{k}) > \sum_{k=1}^K\left[\phi\left(\sum_{t=1}^{k}q_{t}\right)-\phi\left(\sum_{t=1}^{k-1}q_{t}\right)\right]u(x_{k}),
\]
where prizes $x_k$ are ranked by some primitive order and we use convention that $\sum_{t=1}^0p_t=0$. 
\par 
Here we study the empirical implications of RDEU when we observe a cross-section of choices, from a finite collection of menus and lotteries, made by DMs that follow the RDEU rule. Since we allow the primitives of RDEU to be heterogeneous among the DMs (e.g., random $\phi$ and random $u$) we call this model of population behavior Random Rank-Dependent Expected Utility (RRDEU). We fully characterize the RRDEU model and provide a statistical test of it using the tools in \citet{kitamura2018nonparametric} (henceforth KS). 
\par 
As a byproduct of our characterization of RRDEU in a finite stochastic choice dataset (limited stochastic dataset), we provide a characterization of random expected utility (REU). The characterization of REU is the first of its type for limited stochastic datasets. The seminal work of \citet{gul2006random} (henceforth GP) characterizing the REU for an infinite stochastic dataset may provide false positives when applied to a limited stochastic dataset. The two main axioms in the GP characterization are regularity and independence. Regularity requires that the probability of choosing a lottery weakly decreases as the number of alternatives in a menu grows. Independence in GP requires that if we expand a menu $A$ by mixing all the lotteries inside it with an outside lottery $r$, then the probability of choice of the mixture lotteries in the new menu is the same as the probability of choice of the original ones. The problem with the characterization in GP is that if we have only $3$ menus that are not ranked by the set inclusion relation and are not related by the mixture operations required by the independence assumption in GP, the axioms are trivially satisfied. However, we will show that even with $3$ menus such as the ones we have described there are still empirical properties of REU that may fail showing that the test in GP fails for limited stochastic choice dataset. 
\par 
\begin{example}
[Counterexample to \citet{gul2006random}]. Consider three lotteries $\{p,q,p'=\alpha p+ (1-\alpha) r\}$, and we observe three menus $\{p,q\}$, $\{p',q\}$, $\{p',p\}$ such that $\rho_{\{p,q\}}(p)=1$, $\rho_{\{p',p\}}(p')>0$ and $\rho_{\{p',q\}}(p')=0$ (where $\rho_A(a)$ denotes the probability of choosing $a$ in menu $A$). We can observe that the axioms of independence and regularity of GP are trivially satisfied. However, for any fixed $\alpha>0$,  it has to be if the DMs can be described by EU behavior, then $\rho_{\{p',q\}}(p')>0$. The reason is that for the given restrictions on $\rho$, there is at least one expected utility function $U$ with positive mass, such that $U(r)>U(p)>U(q)$, which means that $\rho_{\{p',q\}}(p')>0$. This means that this limited stochastic dataset is inconsistent with REU.    
\end{example}
The previous example shows that the test in GP fails when we consider a limited stochastic dataset. Since EU is a special case of RDEU this is also a problem for the latter. Indeed similar issues will appear in GP's style characterizations of RRDEU.  Our results provide a fix for this issue. The intuition behind our test is that for  any linear order on the set of finite lotteries, we can uncover whether there is $u$ and $\phi$ that can describe this order and, hence, make it consistent with RDEU. If the answer to the previous question is affirmative, then we can use that linear order as the support of a random utility rule. We show that this test can be done using a simple quadratic program in the case of RRDEU that simplifies to a linear program in the case of REU. 
\par
Since we have characterized the empirical content of RRDEU that has a natural population interpretation, our tests can be used in experimental and field datasets that have a cross-section of choices with limited menu variation. Finally, since our test lends itself to statistical testing it can deal with sampling variability which is not the case of GP. To the best of our knowledge no other work has provided a characterization and statistical test for RRDEU in a cross-section of choices. The closest work to ours is by \citet{polisson2020revealed} that provides a nonparametric test of EU and RDEU for a setup with linear budgets and a time series of choices from the same individual in contrast to our cross-section setup.  
\par
In Section~\ref{sec:model}, we describe our setup and formally define the RRDEU. In Section~\ref{sec: construction of rankings}, we provide a construction of the set of all linear orders consistent with RDEU for a given finite set of lotteries and establish the main theoretical result. In Section~\ref{secc:test}, we apply our method to an experimental dataset. Finally, we conclude in Section~\ref{sec: conclusion}.

\section{Model\label{sec:model}}
We consider a finite set of distinct alternatives $X=\{x_{k}\}_{k=1}^K\subset\Real$ such that $x_k<x_{k+1}$ for all $k=1,\dots,K-1$. Let $\Delta(X)$ be the set of all lotteries (distributions) defined on $X$ and $\Pi\subseteq\Delta(X)$ be a finite subset of it. Let $\mathcal{A}\subseteq2^{\Pi}\setminus\{\emptyset\}$ be a collection of menus of lotteries. The probabilistic choice rules are $\rho_{A}\in\Delta(A)$, where $A\in\mathcal{A}$ and $\Delta(A)$ is the simplex defined on menu $A$. A stochastic choice dataset is $\rho=(\rho_{A})_{A\in\mathcal{A}}$. The set of all linear orders on $\Delta(X)\times\Delta(X)$ is denoted by $S$.

\begin{defn}[Random Utility, RU] We say that $\rho$ admits a random utility (RU) representation if there exists $\mu\in\Delta(S)$ such that
\[
\rho_{A}(a)=\sum_{\succ\in S}\mu(\succ)\Char{a\succ b,\:\forall b\in A}
\]
for all $A\in\mathcal{A}$ and $a\in A$. 
\end{defn}
Let $\mathcal{U}$ be the set of all utility functions from $X$ to $\Real$, and $\mathcal{P}$ be the set of all weighting functions $\phi:[0,1]\to[0,1]$ such that $\phi(0)=0$ and $\phi(1)=1$. Every $u$ and $\phi$ together define a rank-dependent expected utility function over the set of lotteries $\Delta(X)$ as
\[
U_{u,\phi}(p)=\sum_{k=1}^K\left[\phi\left(\sum_{t=1}^{k}p_{t}\right)-\phi\left(\sum_{t=1}^{k-1}p_{t}\right)\right]u(x_{k})
\]
for any $p\in\Delta(X)$, where we use convention that $\sum_{t=1}^0p_t=0$. The rank-dependent expected utility function coincides with the standard expected utility function if $\phi(x)=x$.
\par
Similar to the case with the expected utility function, the rank-dependent expected utility function allows us to define preference orders that are consistent with rank-dependent expected utility.
\begin{defn}
We say that the linear order $\succ\in S$ is rank-dependent expected utility linear order or $\succ\in R$ if there exists $u\in\mathcal{U}$ and $\phi\in\mathcal{P}$ such that for any $p,q\in\Delta(X)$
\[
p\succ q \iff U_{u,\phi}(p)>U_{u,\phi}(q).
\]
\end{defn}

Given the set of all rank-dependent expected utility orders we can define when observed data $\rho$ could have been generated by a heterogeneous population of DMs with rank-dependent expected utility orders.
\begin{defn}[Random Rank-Dependent Expected Utility, RRDEU] We say that $\rho$ admits a random rank-dependent expected utility (RRDEU) representation if there exists $\mu\in\Delta\left(R\right)$ such that
\[
\rho_{A}(a)=\sum_{\succ\in R}\mu(\succ)\Char{a\succ b,\:\forall b\in A}
\]
for all $A\in\mathcal{A}$ and $a\in A$.
\end{defn}
The definition of RRDEU allows us to test whether a given data is consistent with it the same way we do it for RU. The only difference is that instead of working with all possible strict linear orders $S$, we need to compute the set $R$. 

\section{Construction of the set of rank-dependent expected utility linear orders\label{sec: construction of rankings}}
Before we describe the general procedure for construction of $R$, we explain how the procedure works with EU. 
\subsection{Expected Utility Model}
As we mentioned before, the standard expected utility model is the special case RRDEU with $\phi(x)=x$. Thus, to check whether a given linear order $\succ$ is expected utility linear order we need to check whether there exists a utility function $u$ such that
\[
p\succ q \iff \sum_{k=1}^Kp_ku(x_k)>\sum_{k=1}^Kq_ku(x_k)
\]
for all $p,q\in \Delta(X)$.
Note that, in spirit of revealed preference inequalities, the latter is equivalent to requiring the existence of a set of reals $\{v_k\}_{k=1}^K$ such that 
\[
p\succ q \iff \sum_{k=1}^Kp_kv_k>\sum_{k=1}^Kq_kv_k.
\]
Since the last inequality must hold for all possible $p$ and $q$, we end up having a system of linear inequalities
\[
\sum_{k=1}^K(p_k-q_k)v_k>0, \quad p,q\in \Delta(X).
\]
If there are finitely many lotteries (i.e., we consider $p,q\in\Pi$, $\abs{\Pi}<\infty$), then let $p^{\succ(l)}$ denote the $l$-th best lottery in $\Pi$ according to $\succ$, then it suffices to conduct $\abs{\Pi}-1$ comparisons to get the following system of linear inequalities:
\[
\sum_{k=1}^K(p^{\succ(l)}_k-p^{\succ(l+1)}_k)v_k>0, \quad l=1,\dots,\abs{\Pi}-1.
\]
Checking whether a finite set of linear inequalities has a solution is a linear programming problem and can be done very efficiently and fast.

\subsection{Rank-dependent Expected Utility}
Next we describe how to extend the procedure for the standard expected utility model to the rank-dependent one.
Given a candidate linear order $\succ\in S$, we need to check that there exist $u$ and $\phi$ such that for any lotteries $p$ and $q$
\[
\sum_{k=1}^K\left[\phi\left(\sum_{t=1}^{k}p_{t}\right)-\phi\left(\sum_{t=1}^{k-1}p_{t}\right)-\phi\left(\sum_{t=1}^{k}q_{t}\right)+\phi\left(\sum_{t=1}^{k-1}q_{t}\right)\right]u(x_{k})>0.
\]
For a finite set of lotteries, we can reformulate the problem as requiring existence of two sets of reals $\{v_k\}_{k=1}^K$ and $\{f_{l,k}\}_{k=1,\:l=1}^{K-1,\:\abs{\Pi}}$ such that
\begin{align}\label{eq: main system}
    &\nonumber\sum_{k=1}^K\left[f_{l,k}-f_{l,k-1}-f_{l+1,k}+f_{l+1,k-1}\right]v_k>0,\quad l=1,\dots,\abs{\Pi}-1,\\
    &\nonumber f_{l,k}=f_{s,m}, \text{ whenever } \sum_{t=1}^k p^{\succ(l)}_t=\sum_{t=1}^m p^{\succ(s)}_t,\\
    &f_{l,k}=0, \text{ whenever } \sum_{t=1}^k p^{\succ(l)}_t=0,\\
    &\nonumber f_{l,k}=1, \text{ whenever } \sum_{t=1}^k p^{\succ(l)}_t=1,\\
    &\nonumber 0\leq f_{l,k}\leq1.
\end{align}
The first set of inequalities just imply that the rank-dependent utilities imply the order consistent with $\succ$. The rest of the constrains just follow from the definition of $\phi$ (i.e., $\phi:[0,1]\to[0,1]$, $\phi(0)=0$, and $\phi(1)=1$). Checking whether the system (\ref{eq: main system}) is satisfied for some $\{v_k\}_{k=1}^K$ and $\{f_{l,k}\}_{k=1,\:l=1}^{K-1,\:\abs{\Pi}}$ is a quadratic problem, which also can be solved efficiently and fast.
\par
The next lemma formally establishes that the system (\ref{eq: main system}) provides necessary conditions for $\succ$ being rank-dependent expected utility order.
\begin{lem}\label{lemma: necessety}
Given $\Pi\subseteq\Delta(X)$, $\abs{\Pi}<\infty$, the linear order $\succ\in S$ belongs to $R$ only if there exist $\{v_k\}_{k=1}^K$ and $\{f_{l,k}\}_{k=1,\:l=1}^{K-1,\:\abs{\Pi}}$ such that the system (\ref{eq: main system}) is satisfied.
\end{lem}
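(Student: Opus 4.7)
The strategy is pure construction: assume $\succ\in R$ and build the required $v_k$ and $f_{l,k}$ directly from the primitives of the RDEU representation. By the definition of $R$, there exist $u\in\mathcal{U}$ and $\phi\in\mathcal{P}$ such that $p\succ q\iff U_{u,\phi}(p)>U_{u,\phi}(q)$ for all $p,q\in\Delta(X)$, and in particular for all lotteries in the finite set $\Pi$.

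Given this $(u,\phi)$, I would simply set
\[
v_k:=u(x_k)\quad\text{for } k=1,\dots,K,\qquad f_{l,k}:=\phi\!\left(\sum_{t=1}^{k}p^{\succ(l)}_{t}\right),
\]
with the natural boundary conventions $f_{l,0}=\phi(0)=0$ and $f_{l,K}=\phi(1)=1$ coming from $\phi\in\mathcal{P}$. The remaining task is a one-by-one verification of the four families of constraints in system (\ref{eq: main system}).

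For the strict inequalities: because $p^{\succ(l)}\succ p^{\succ(l+1)}$ by construction of the ranking, we have $U_{u,\phi}(p^{\succ(l)})>U_{u,\phi}(p^{\succ(l+1)})$; expanding both sides using the definition of $U_{u,\phi}$ and substituting the definitions of $v_k$ and $f_{l,k}$ gives exactly the first line of (\ref{eq: main system}). The equality constraint $f_{l,k}=f_{s,m}$ whenever the two cumulative sums coincide is immediate from the fact that $\phi$ is a well-defined function. The two pinning conditions $f_{l,k}=0$ when the cumulative sum is $0$, and $f_{l,k}=1$ when it equals $1$, follow from $\phi(0)=0$ and $\phi(1)=1$. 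Finally, $0\le f_{l,k}\le 1$ follows from $\phi$ mapping $[0,1]$ into $[0,1]$.

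Because this direction is only necessity, no existence or separation argument is needed and there is no genuine obstacle; the lemma amounts to a reindexing of the RDEU formula. The only point requiring a touch of care is the bookkeeping at the endpoints $k=0$ and $k=K$, which is handled by the boundary conventions above and by the explicit constraints in the system that encode $\phi(0)=0$ and $\phi(1)=1$. The converse (sufficiency) is what would require real work, but the lemma as stated does not claim it.
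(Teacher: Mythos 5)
Your construction is exactly the paper's proof: the authors also set $v_k=u(x_k)$ and $f_{l,k}=\phi\left(\sum_{t=1}^{k}p^{\succ(l)}_{t}\right)$ and note that the constraints of system (\ref{eq: main system}) then hold by definition of $U_{u,\phi}$ and of $\phi\in\mathcal{P}$. Your version just spells out the one-by-one verification that the paper leaves as ``trivially satisfied,'' so it is correct and the same argument.
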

\begin{proof}
The result is trivially satisfied if one takes $v_k=u(x_k)$ and $f_{l,k}=\phi\left(\sum_{t=1}^{k}p^{\succ(l)}_{t}\right)$.
\end{proof}
\par
Next, we state the extension proposition that guaranties that the system (\ref{eq: main system}) leads to a rank-dependent expected utility order.
\begin{proposition}\label{prop:sufficiency}
Given $\Pi\subseteq\Delta(X)$, $\abs{\Pi}<\infty$, if there exist $\{v_k\}_{k=1}^K$ and $\{f_{l,k}\}_{k=1,\:l=1}^{K-1,\:\abs{\Pi}}$ such that the system (\ref{eq: main system}) is satisfied for some $\succ^* \in\Pi\times\Pi$, then there exists $\succ\in R$ that coincides with $\succ^*$ on $\Pi\times\Pi$.
\end{proposition}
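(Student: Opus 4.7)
The plan is to reverse-engineer the RDEU witnesses from the solution of the system. I would set $u(x_k) = v_k$ for each $k$ and then build $\phi$ in two stages: first on the finite set of ``relevant'' cumulative probabilities, and then extend it to all of $[0,1]$. Concretely, let
\[
\mathcal{F} = \left\{\sum_{t=1}^k p^{\succ^*(l)}_t : l = 1, \ldots, |\Pi|,\ k = 0, 1, \ldots, K\right\} \subseteq [0,1].
\]
For each $z \in \mathcal{F}$, pick any $(l, k)$ with $z = \sum_{t=1}^k p^{\succ^*(l)}_t$ and set $\phi(z) = f_{l, k}$, using the conventions $f_{l, 0} = 0$ and $f_{l, K} = 1$. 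The equality constraints in (\ref{eq: main system}) make this assignment independent of the choice of $(l, k)$, and the zero/one constraints pin down $\phi(0) = 0$ and $\phi(1) = 1$.

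Next, I would extend $\phi$ from $\mathcal{F}$ to all of $[0, 1]$ by piecewise linear interpolation between consecutive points of $\mathcal{F}$, ordered as $0 = z_0 < z_1 < \cdots < z_M = 1$. Since each $f_{l,k}$ lies in $[0, 1]$ by the range constraint, the interpolated values stay inside $[0, 1]$; combined with the boundary values this gives $\phi \in \mathcal{P}$. Because $\mathcal{P}$ imposes no monotonicity or continuity beyond the endpoints, any extension staying in $[0,1]$ would in fact work equally well.

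To verify the order, note that the cumulative probabilities of every lottery in $\Pi$ lie in $\mathcal{F}$, so for each $l$,
\[
U_{u,\phi}(p^{\succ^*(l)}) = \sum_{k=1}^K \left(f_{l,k} - f_{l,k-1}\right) v_k.
\]
Subtracting the analogous expression for $l+1$ and invoking the first block of inequalities in (\ref{eq: main system}) yields $U_{u,\phi}(p^{\succ^*(l)}) > U_{u,\phi}(p^{\succ^*(l+1)})$ for every $l = 1, \ldots, |\Pi| - 1$, and transitivity extends this strict ranking to all pairs in $\Pi$. I would then take $\succ$ to be any strict linear order on $\Delta(X)$ generated by $U_{u,\phi}$, breaking ties outside $\Pi$ arbitrarily; since the RDEU values of distinct lotteries in $\Pi$ are already strictly ordered by the previous step, $\succ$ restricted to $\Pi \times \Pi$ coincides with $\succ^*$, and by construction $\succ \in R$.

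The substantive step is the well-definedness of $\phi$ on $\mathcal{F}$: whenever the same cumulative probability arises from distinct pairs $(l, k)$ and $(s, m)$, one must have $f_{l,k} = f_{s,m}$, which is exactly what the equality constraints of (\ref{eq: main system}) guarantee, together with the boundary constraints that force the correct values at $0$ and $1$. Once that consistency is in hand, the extension of $\phi$ to $[0, 1]$ is a free choice and the verification that the induced order agrees with $\succ^*$ is a direct substitution into the definition of $U_{u,\phi}$.
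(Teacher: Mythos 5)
Your proposal is correct and follows essentially the same route as the paper: set $u(x_k)=v_k$ and take $\phi$ to be a piecewise-linear interpolation through the nodes $\bigl(\sum_{t=1}^{k}p^{\succ^*(l)}_t,\,f_{l,k}\bigr)$. You simply spell out two steps the paper leaves implicit --- that the equality constraints in (\ref{eq: main system}) make $\phi$ well defined on the set of relevant cumulative probabilities, and that the first block of inequalities plus transitivity recovers $\succ^*$ on all of $\Pi\times\Pi$ --- so no substantive difference.
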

\begin{proof}
The proposition requires building functions $u$ and $\phi$ from $\{v_k\}_{k=1}^K$ and $\{f_{l,k}\}_{k=1,\:l=1}^{K-1,\:\abs{\Pi}}$. The simplest utility function that will make $\succ$ rank-dependent expected utility order is any peace-wise linear function with nods at points $\{(x_k,v_k)\}_{k=1}^{K}$. To construct $\phi$ one can also take any piece-wise linear function with nods at points $\{(\sum_{t=1}^{k}p^{\succ(l)},f_{k,l})\}_{k=1,\:l=1}^{K-1,\:\abs{\Pi}}$.
\end{proof}
\par
After the set $R$ is constructed, the problem of testing whether a stochastic dataset $\rho$ admits a RRDEU representation can be done by testing the restricted RU model as in McFadden Richter \citet{mcfadden1990stochastic} and KS.
\par 
Let $R_{\Pi}$ be the set of all linear orders on $\Pi\times\Pi$ that consistent with RDE (i.e.,  the restriction of $R$ to $\Pi$).\footnote{The set of linear orders $R_{\Pi}$ can be replaced by the set of expected utility ranking.} We can use system~\ref{eq: main system} to uncover the elements of $R_{\Pi}$. We want to test whether $\rho$ admits a RRDEU representation, this turns out to be equivalent to testing whether $\rho$ can be generated by a population of DMs whose preferences are in $R_{\Pi}$.  Let $B$ be the matrix of the size $d_{\rho}\times \abs{R_{\Pi}}$, where $d_{\rho}$ is the dimensionality of $\rho$ such that $(k,l)$ element of it is equal to
\[
B_{k,l}=\Char{a\in\A}\Char{a\succ_l c,\:\forall\: c\in A},
\]
where $k$ corresponds to a pair $(a,A)$ such that $a\in A$, and $\succ_l$ is $l$-th linear order from $R_{\Pi}$. By \citet{mcfadden1990stochastic} and KS, $\rho$ can be explained by a population of DMs whose preferences are in $R_{\Pi}$ if and only if
\[
\rho=Bv
\]
for some $v\in\Real^{\abs{R_{\Pi}}}_{+}$.
This is our main result:
\begin{theorem}\label{thm:main theorem}
The following are equivalent:
\begin{enumerate}
    \item A stochastic dataset $\rho$ admits a RRDEU representation.
    \item A stochastic dataset $\rho$ is such that there exists some $v\in\Real^{\abs{R_{\Pi}}}_{+}$ such that $\rho=Bv$.\footnote{This result generalizes some informal results about EU in \citet{ABKKold}.}
\end{enumerate}
\end{theorem}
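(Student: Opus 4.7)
My plan is to prove the equivalence by reducing the RRDEU condition, which in principle involves a measure on the possibly infinite set $R$, to a finite-dimensional condition on distributions over $R_{\Pi}$. Proposition~\ref{prop:sufficiency} is the bridge that makes this reduction lossless.

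For $(1)\Rightarrow(2)$, I would start from an RRDEU representation $\mu\in\Delta(R)$. Because every menu $A\in\mathcal{A}$ lives inside $\Pi$, the indicator $\Char{a\succ b,\:\forall b\in A}$ depends on $\succ$ only through its restriction to $\Pi\times\Pi$. I would therefore define, for each $\succ'\in R_{\Pi}$, the nonnegative weight
\[
v_{\succ'}\;:=\;\mu\bigl(\{\succ\in R:\succ|_{\Pi\times\Pi}=\succ'\}\bigr),
\]
and regroup the RRDEU sum along this partition of $R$ to obtain $\rho=Bv$ row by row.

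For $(2)\Rightarrow(1)$, suppose $\rho=Bv$ with $v\geq 0$. For any fixed $A$, the column of $B$ indexed by a given $\succ_l\in R_{\Pi}$ has exactly one nonzero entry among the rows $\{(a,A):a\in A\}$, because $\succ_l$ picks a unique top element of $A$. Summing those rows of $\rho=Bv$ and using $\sum_{a\in A}\rho_A(a)=1$ forces $\sum_{\succ'\in R_{\Pi}}v_{\succ'}=1$, so $v$ is a probability distribution on $R_{\Pi}$. For each $\succ'$ with $v_{\succ'}>0$ I would invoke Proposition~\ref{prop:sufficiency} to pick one RDEU extension $\widetilde{\succ}\in R$ satisfying $\widetilde{\succ}|_{\Pi\times\Pi}=\succ'$, and set $\mu(\widetilde{\succ}):=v_{\succ'}$. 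Because each extension preserves comparisons among elements of $\Pi$, evaluating the RRDEU formula at this $\mu$ reproduces $Bv=\rho$ on every pair $(a,A)$.

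The only step that is not pure bookkeeping is the extension supplied by Proposition~\ref{prop:sufficiency}: it is exactly what guarantees that $R_{\Pi}$ coincides with the set of restrictions of $R$ to $\Pi$, so that distributions on $R_{\Pi}$ can be lifted to distributions on $R$. Everything else is the standard McFadden--Richter-style tabulation exploited in KS. The weighting function $\phi$ plays no further role at this final stage because all of its content has already been absorbed into the construction of $R_{\Pi}$ through the system~(\ref{eq: main system}).
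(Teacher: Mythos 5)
Your argument is correct and follows the same route the paper intends: the paper's proof is just the one-line citation of Lemma~\ref{lemma: necessety}, Proposition~\ref{prop:sufficiency}, and the McFadden--Richter/KS characterization, and your write-up is precisely the fleshed-out version of that---pushing $\mu$ forward to $R_{\Pi}$ via restriction for $(1)\Rightarrow(2)$, and lifting a distribution on $R_{\Pi}$ back to $R$ via Proposition~\ref{prop:sufficiency} for $(2)\Rightarrow(1)$. The observation that $\sum_{a\in A}\rho_A(a)=1$ forces $v$ to be a probability vector is a nice explicit touch that the paper leaves implicit in its appeal to KS.
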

The proof of Theorem~\ref{thm:main theorem} follows from Lemma~\ref{lemma: necessety}, Proposition~\ref{prop:sufficiency}, and KS. 
\par 
The RRDEU model is a strict generalization of REU. It is less general that RU because it predicts first order stochastic dominance (under mild monotonicity constraints) which RU does not require. \citet{quiggin1991comparative} provides additional implications of RDEU. 

\subsection{Shape Restrictions\label{secc:shape constraint}}
Our framework allows us to impose monotonicity or concavity/convexity on $\phi$.
Monotonicity is a normatively desirable property. \citet{abdellaoui2002genuine} shows that convexity of $\phi$ is related to risk aversion. In particular, imposing it will guarantee that consumers are risk averse if the Bernoulli utility is set to the identity. \par
In particular, to impose the restriction that $\phi$ is weakly monotonically increasing it suffices to enhance the system (\ref{eq: main system}) with the following set of linear inequality constraints:
\begin{equation}\label{eq: monotonicity}
f_{l,k}\geq f_{s,m}, \text{ whenever } \sum_{t=1}^k p^{\succ(l)}_t\geq\sum_{t=1}^m p^{\succ(s)}_t.
\end{equation}
\par
Convexity (concavity) can be imposed using cyclical monotonicity \citep{rockafellar2015convex}. In particular, for all cycles of indices $\{l_j,k_j\}_{j=1}^{J}$ such that $l_{J}=l_1$ and $k_{J}=k_1$ let $\{f_{l,k}\}_{k=1,\:l=1}^{K-1,\:\abs{\Pi}}$ satisfy
\begin{equation}\label{eq: convexity}
\sum_{j=1}^{J-1} \left(f_{l_{j+1},k_{j+1}}-f_{l_{j},k_{j}}\right)\sum_{t=1}^{k_{j+1}} p^{\succ(l_{j+1})}_t\geq 0. 
\end{equation}
\begin{proposition}
Given a finite set of lotteries $\Pi$, the linear order $\succ\in S$ is rank-dependent expected utility linear order with weakly increasing and convex $\phi$ if and only if there exist $\{v_k\}_{k=1}^K$ and $\{f_{l,k}\}_{k=1,\:l=1}^{K-1,\:\abs{\Pi}}$ such that the system (\ref{eq: main system}) together with restrictions (\ref{eq: monotonicity}) and (\ref{eq: convexity}) is satisfied.
\end{proposition}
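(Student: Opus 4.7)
The plan is to mirror the two-step structure of Lemma~\ref{lemma: necessety} and Proposition~\ref{prop:sufficiency}, handling the extra shape restrictions on $\phi$ one at a time.

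For the ``only if'' direction, suppose $\succ$ is represented by some weakly increasing and convex $\phi$ together with some $u$. Taking $v_k=u(x_k)$ and $f_{l,k}=\phi\bigl(\sum_{t=1}^{k}p^{\succ(l)}_{t}\bigr)$, as in Lemma~\ref{lemma: necessety}, immediately delivers system~(\ref{eq: main system}). Condition (\ref{eq: monotonicity}) is then an immediate consequence of monotonicity of $\phi$: whenever $\sum_{t=1}^{k}p^{\succ(l)}_{t}\geq\sum_{t=1}^{m}p^{\succ(s)}_{t}$, the induced $f$-values inherit the same inequality. For (\ref{eq: convexity}), convexity of $\phi$ gives, via Rockafellar's theorem, cyclical monotonicity of its subdifferential; rearranging the telescoping sum in (\ref{eq: convexity}) using $l_{J}=l_{1}$ and $k_{J}=k_{1}$ yields exactly the standard cyclical-monotonicity inequality at the scan points $y_{j}=\sum_{t=1}^{k_{j}}p^{\succ(l_{j})}_{t}$ with values $f_{l_{j},k_{j}}$.

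For the ``if'' direction, start from a solution $\{v_{k}\},\{f_{l,k}\}$ of the augmented system. Construct $u$ by piecewise linear interpolation through the nodes $\{(x_{k},v_{k})\}_{k=1}^{K}$, exactly as in Proposition~\ref{prop:sufficiency}. To construct $\phi:[0,1]\to[0,1]$, collect the distinct scan values $y_{l,k}=\sum_{t=1}^{k}p^{\succ(l)}_{t}$ together with the endpoints $0$ and $1$ and sort them in increasing order. The boundary restrictions of (\ref{eq: main system}) pin $f=0$ at $y=0$ and $f=1$ at $y=1$, and (\ref{eq: monotonicity}) forces the paired $f$-values to be nondecreasing along this sorted list. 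Applying Rockafellar's theorem to the pairs $(y_{l,k},f_{l,k})$ via (\ref{eq: convexity}) ensures that the discrete slopes between consecutive pairs are nondecreasing, so the piecewise linear interpolant through these nodes is both weakly increasing and convex on $[0,1]$. Proposition~\ref{prop:sufficiency} applied to the constructed pair $(u,\phi)$ then delivers $\succ$ as an RDEU order with the prescribed shape properties.

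The principal difficulty is the sufficiency step: translating the cyclical inequality (\ref{eq: convexity}), stated over arbitrary cycles of indices $\{(l_{j},k_{j})\}$, into nondecreasing secant slopes of the piecewise linear interpolant once the scan points are sorted on $[0,1]$. This is the substantive use of Rockafellar's theorem in the construction of $\phi$. After that step, convexity and monotonicity of $\phi$ follow from elementary properties of piecewise linear functions, and the remainder of the argument reduces to a direct invocation of Proposition~\ref{prop:sufficiency}.
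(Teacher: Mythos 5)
The paper states this proposition without any proof, so there is nothing official to compare against; I can only assess your argument on its own terms. Your overall architecture (reuse Lemma~\ref{lemma: necessety} and Proposition~\ref{prop:sufficiency}, then check that the two shape restrictions transfer in each direction) is sensible, and the monotonicity half is fine: condition (\ref{eq: monotonicity}) is exactly monotonicity of the interpolation data, in both directions.

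The gap is in how you handle (\ref{eq: convexity}). Write $y_j=\sum_{t=1}^{k_j}p^{\succ(l_j)}_t$ and $f_j=f_{l_j,k_j}$. Using $f_J=f_1$, $y_J=y_1$ and Abel summation, (\ref{eq: convexity}) is equivalent to $\sum_{j=1}^{J-1}f_j\left(y_{j+1}-y_j\right)\leq 0$ for every cycle, i.e.\ to cyclical monotonicity of the \emph{relation} $\{(y_{l,k},f_{l,k})\}$. In one dimension, cyclical monotonicity of a relation is equivalent to its plain monotonicity (the two-point cycles give monotonicity; conversely any monotone relation on $\Real$ sits inside the subdifferential of a convex function, which is cyclically monotone). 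So (\ref{eq: convexity}) is already implied by (\ref{eq: monotonicity}) and carries no extra information. Rockafellar's theorem tells you that the $f$'s can be realized as \emph{subgradients} of some convex potential at the points $y$; it does not tell you that the piecewise linear interpolant through the nodes $(y_j,f_j)$ --- which is what you must use as $\phi$, since the boundary constraints in (\ref{eq: main system}) pin the $f$'s down as \emph{values} of $\phi$ --- has nondecreasing secant slopes. Concretely, take scan points $0,\tfrac12,1$ with $f$-values $0,\,0.9,\,1$: every inequality in (\ref{eq: monotonicity}) and (\ref{eq: convexity}) holds, yet no convex $\phi$ with $\phi(0)=0$ and $\phi(1)=1$ can satisfy $\phi(1/2)=0.9>1/2$. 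Hence your key sufficiency step, ``(\ref{eq: convexity}) $\Rightarrow$ nondecreasing discrete slopes,'' fails; what convexity of the interpolant actually requires is monotonicity of the divided differences $(f_{j+1}-f_j)/(y_{j+1}-y_j)$ across consecutive sorted scan points, a strictly stronger condition than (\ref{eq: convexity}). The same conflation of values with subgradients appears, harmlessly, in your necessity direction: (\ref{eq: convexity}) does hold there, but because $\phi$ is increasing, not because it is convex.
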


\subsection{Econometric Testing}\label{subsec: econometric testing}
Here we deal with sampling variability. Sampling variability arises from the fact that $\rho$ can be only consistently estimated by the realized choice frequencies $\hat{\rho}$. This section follows closely \citet{ABKK2021}. First we need some additional notation. For every $A\in\mathcal{A}$, let $n_{A}$ denote the number of individuals in the sample that faced choice set $A$, and let $\rand{a}_{i,A}$, $i=1,\dots,n_{A}$, be the observed choice of individual $i$ from choice set $A$. Here we assume that the researcher observes a cross-section of observations for every $A\in\mathcal{A}$. Given this we define the estimated stochastic choice rule as 
\[
\hat \rho=(\hat \rho_A(a))_{A\in\mathcal{A},a\in A}.
\]
with $\hat \rho_A(a)=n_{A}^{-1}\sum_{i=1}^{n_{A}}\Char{\rand{a}_{i,A}=a}$ for any $a\in A$. 
\par
A natural test statistic based on Theorem~\ref{thm:main theorem} is
\begin{align*}
    \mathrm{T}_n=n\min_{v\in\Real^{\abs{R_{\Pi}}}_{+}}\norm{\hat \rho-Bv}^2,
\end{align*}
where $n=\sum_{A}n_{A}$ is the sample size.
\par
Let $\hat{\rho}^{*}_l$, $l=1,\dots,L$, be bootstrap replications of $\hat \rho$. Let  $\tau_n\geq 0$ be a tuning parameter and $\iota$ be a vector of ones of dimension $\abs{R_{\Pi}}$.\footnote{We conducted tests with $\tau_n=\sqrt{\dfrac{\log(\min_{A}n_{A})}{\min_{A}n_{A}}}$ following KS.} To compute the critical values of $\mathrm{T}_n$ we follow the bootstrap procedure proposed in KS:
\begin{enumerate}
    \item Compute $\hat\eta_{\tau_n}=Bv_{\tau_n}$, where $v_{\tau_n}$ solves
    \[
    n\min_{[v-\tau_n \iota/{\abs{R_{\Pi}}}]\in\Real^{\abs{R_{\Pi}}}_{+}}\norm{\hat{\rho}-Bv}^2;
    \]
\item Compute the bootstrap test statistic
\[
\mathrm{T}^{*}_{n,l}=n\min_{[v-\tau_n \iota/d]\in\Real^{\abs{R_{\Pi}}}_{+}}\norm{\hat{\rho}^{*}_l-\hat{\rho}+\hat\eta_{\tau_n}-Bv}^2,\quad l=1,\dots,L;
\]
\item Use the empirical distribution of the bootstrap statistic to compute critical values of $\mathrm{T}_n$.
\end{enumerate}
For a given confidence level $\alpha\in (0,1/2)$, the decision rule for the test is ``reject the null hypothesis if $\mathrm{T}_n>\hat c_{1-\alpha}$'', where $\hat c_{1-\alpha}$ is an $(1-\alpha)$-quantile of the empirical distribution of the bootstrap statistic. 

\section{Testing for RRDEU and REU in Experimental Data}\label{secc:test}
\subsection*{Data}
To illustrate the proposed methodology we use a subsample of the dataset from \citet{ABKK2021} to test for consistency of the behavior of a population of DMs with RRDEU and REU. Crucially, we use only one treatment (frame) of \citet{ABKK2021} corresponding to a situation where the cost of experimental subjects to pay attention is designed to be low. In that sense, we can avoid thinking about limited attention/consideration in this paper. 
\par 
\citet{ABKK2021} conducted the experiment in Amazon MTurk for a large cross-section with at most two (disjoint) choice sets per individual. In the experiment, the set of prizes is $X=\{0,10,12,20,30,48,50\}$. The set of lotteries $\Pi$ is presented in Table~\ref{lotteries}. For example, the first lottery can be written as $l_1=\left(1/2,0,0,0,0,0,1/2\right)\tr$ (i.e. it assigns positive probability to prizes $0$ and $50$ only). Similarly, the second lottery can be written as $l_2=\left(0,0,1/2,0,0,1/2,0\right)\tr$. 

\begin{table}[ht]	\caption[Lotteries]{\textsc{Lotteries measured in tokens, expected values, and variance}}\label{lotteries}  \centering
\begin{tabular}{@{\extracolsep{1pt}}|c l| c | c |@{}}
\cline{1-4}
\multicolumn{2}{|c|}{} &  &  \\	[.25ex] 
\multicolumn{2}{|c|}{\textsc{Lottery}} & \textsc{Expectation} & \textsc{Variance}  \\ 
 \cline{1-4} 
 &&&\\ [.1ex]
(1)&$\frac{1}{2} 50 + \frac{1}{2} 0$ & 25 & 625\\[1ex]
(2)&$\frac{1}{2} 30 + \frac{1}{2} 10$ & 20 & 100\\[1ex]
(3)&$\frac{1}{4} 50 + \frac{1}{4} 30 + \frac{1}{4} 10 + \frac{1}{4} 0$ & 22.5 & 368.75\\[1ex]
(4)&$\frac{1}{4} 50 + \frac{1}{5} 48 + \frac{3}{20} 14 + \frac{2}{5} 0$ & 24.125 & 511.73\\[1ex]
(5)&$\frac{1}{5} 48 + \frac{1}{4} 30 + \frac{3}{20} 14 + \frac{1}{4} 10+ \frac{3}{20} 0$ & 21.625 & 251.11\\[1ex]
(o) & 12 with probability 1  & 12 & 0\\ \cline{1-4}\end{tabular}
\end{table}

All sessions were run between August 25, 2018 and September 17, 2018 on the MTurk platform with surveys designed in Qualtrics. The data contains choices 4099 choices from all possible nonsingleton menus that contain the default lottery $o$, which pays 12 tokens with certainty. For more details on the experiment see \citet{ABKK2021}.
\par
\paragraph{Structure of the Lotteries.} 
Here we show the special structure of our alternatives that allows us to test EU (and the independence axiom) as a restriction on the set of linear orders (i.e, we use $R^{EU}_{\Pi}$ to denote the restriction of the set of linear orders consistent with EU to $\Pi$).  
The experiment was design with power against REU. Here we will use the \citet{ABKK2021} experiment to test also RRDEU.
\par
If $\succ\in R^{EU}_{\Pi}$, then independence implies that for any $p,q,r\in\Delta(X)$ and any $\alpha\in(0,1)$
\[
p\succ q\iff \alpha p +(1-\alpha)r \succ \alpha q +(1-\alpha) r\,.
\]

To understand additional restrictions imposed by independence, define the auxiliary lottery $r=\left(0,2/5,0,3/10,0,0,3/10\right)\tr$, and note the following relations among lotteries in $\Pi$: 
\begin{equation*}
l_3=\frac{1}{2}l_1+\frac{1}{2}l_2,\quad l_4=\frac{1}{2}l_1+\frac{1}{2}a,\quad
l_5=\frac{1}{2}l_2+\frac{1}{2}a\,.
\end{equation*}
This structure restricts the possible orders that are compatible with expected utility: (i) if $l_1\succ l_2$, then $l_1\succ l_3$, $l_3\succ l_2$, and $l_4\succ l_5$; or (ii) if $l_2\succ l_1$, then $l_2\succ l_3$, $l_3\succ l_1$, and $l_5\succ l_4$. 
\par
However, the previous restrictions are only implications of the expected utility assumption. The necessary and sufficient conditions have been spelled out in Theorem~\ref{thm:main theorem}.

\subsection*{Results}
We apply the procedure described in Section~\ref{subsec: econometric testing} to test whether REU and RRDEU can explain the data. The results of testing are presented in Table~\ref{results_joit_pref}. In this table, we report the values of the test statistic and the corresponding p-values coming from the bootstrap distribution ($1000$ bootstrap replications for every test statistic were conducted) for two models.\footnote{The p-value is interpreted as the probability of observing a realization of the test statistic that is above the one that is actually observed due to sample variability, if the null hypothesis is indeed correct. Then, the smaller the p-value is, the more evidence the researcher has to reject the hypothesis of the validity of a given model.}
We reject expected utility at the $5$ percent significance level. At the same time, we cannot reject the rank-dependent expected utility model at any standard significance level.
\par 
We must highlight that the fact that RRDEU is not rejected while REU is, is not entirely surprising since RRDEU is a strict generalization of REU. However, there is no reason \emph{a priori} to think that RRDEU explains the experimental datasets used here. In that sense, we add some robust nonparametric empirical evidence supporting the use of RRDEU instead of REU to explain choice over risky prospects.

\begin{table}
\begin{center} 
\caption[Consistency preferences]{\textsc{Testing Results}  } \label{results_joit_pref} 
\centering
\begin{tabular}{@{\extracolsep{2pt}} |l cc  |  @{}}
\cline{1-3}
\multicolumn{3}{|c|}{}\\
\textsc{Model}&\textsc{$\mathrm{T}_n$}&\textsc{p-value}\\ [2ex]
\cline{1-3} 
\multicolumn{3}{|c|}{}\\
REU &	387.72 & 0.013\\[2ex]
RRDEU &130.75 & 0.906\\[2ex]
\multicolumn{3}{|c|}{}\\ \cline{1-3}
\multicolumn{3}{l}{\footnotesize{Notes: Number of bootstrap replications=1000.}}
\end{tabular}
\end{center} 
\end{table}

\section{Conclusions}\label{sec: conclusion}
We have proposed a new characterization of RDEU when we observe a cross-section of choices from heterogeneous DMs that choose from a finite set of lotteries and a finite collection of menus. We have established a nonparametric and computationally feasible test of RRDEU (and as a byproduct a test of REU). Our main result lends itself to statistical testing using the tools in KS. This test is wide applicable in experimental and field stochastic choice datasets. 

\bibliographystyle{econ}
\phantomsection\addcontentsline{toc}{section}{\refname}\bibliography{rankdepEU}
\end{document}